\documentclass[11pt,a4paper]{article}
\usepackage{amsmath,amssymb,amsthm}
\usepackage{booktabs}
\usepackage{graphicx}
\usepackage{hyperref}
\usepackage[margin=2.5cm]{geometry}
\usepackage{enumitem}
\usepackage{microtype}

\newtheorem{theorem}{Theorem}
\newtheorem{lemma}{Lemma}

\newtheorem{proposition}{Proposition}
\newtheorem{definition}{Definition}
\newtheorem{remark}{Remark}
\newtheorem{observation}{Observation}
\newtheorem{assumption}{Assumption}

\title{The Vote-Left Equilibrium: A Deterministic Coordination
Strategy for the Faithful in The Traitors}
\author{Vincent Knight}
\date{\today}

\begin{document}

\maketitle

\begin{abstract}
\emph{The Traitors} is a social deduction game in which an informed minority
of Traitors face an uninformed majority of Faithful, and the recurring
question facing the Faithful is how to vote. Random voting is known to be
optimal for the uninformed majority under simultaneous-signal protocols
\cite{braverman2008}, but when votes are cast individually, random votes are
indistinguishable from strategic ones and the Faithful remain exposed to
coordinated Traitor collusion. We introduce the \emph{Vote-Left protocol}, a
deterministic rule under which every player votes for the next surviving
player in a fixed cyclic ordering. Under full compliance every surviving
player receives exactly one vote, so the banishment distribution coincides
with random voting; since prescribed votes are deterministic functions of
public information, any deviation is immediately identifiable. Combined with
a simple punishment rule, Vote-Left constitutes a Perfect Bayesian
Equilibrium for every state with \(n_t > 2m_t + 2\), a region that
contains every televised configuration. We characterise the Traitors'
best response in the late-game phase (\(n_t \leq 2m_t + 2\)): deviate
via collusion once the Faithful no longer have enough votes to guarantee
punishment. Across the configurations played on television, Vote-Left
raises the Faithful's winning probability by a factor of approximately
three over random voting under collusion.
\end{abstract}

\section{Introduction}

Social deduction games pit an informed minority against an uninformed
majority, and have become a popular setting for studying voting,
communication, and the use of public information under uncertainty. The
classical example is the Mafia game \cite{davidoff1986}, in which Mafia
members covertly eliminate citizens at night and, by day, the full group
votes to eliminate a suspect. \emph{The Traitors} (originally \emph{De
Verraders}, Netherlands, 2021) adapts this structure for television: a host
assigns a small group as Traitors (who know one another), the remainder
are Faithful, and play alternates between night murders and day
banishment votes until either all Traitors have been eliminated (Faithful
win) or the Traitors reach parity (Traitors win).

The natural baseline strategy for the Faithful is to vote at random.
Braverman, Etesami and Mossel \cite{braverman2008} showed that, under a
simultaneous-signal randomisation protocol, random voting is in fact
optimal for the citizens in Mafia, and Migda\l{}
\cite{migdal2010} subsequently derived a closed-form expression for the
winning probability under mutual random play. Wang \cite{wang2024} then
showed that when votes are cast individually the Traitors can substantially
improve their odds through coordinated collusion: an individual random
vote is indistinguishable from a strategic one, so a colluding Traitor
coalition is shielded by the same noise that protects the Faithful. The
Faithful are therefore caught in a tension between the optimality of random
voting in the abstract simultaneous-signal model and the vulnerability of
random voting in the actual game played on television.

We introduce the \emph{Vote-Left protocol} to resolve this tension. Every
player votes for the next surviving player in a fixed cyclic ordering, one
instance of a broader class of single-cycle derangements that share the
same properties. Under full compliance the protocol reproduces the uniform
expectation of random voting, so the day-phase distribution is unchanged. At
the same time, any deviation from the protocol is immediately detectable.
Combined with a punishment rule, Vote-Left constitutes a Perfect Bayesian
Equilibrium (PBE) for every state with \(n_t > 2m_t + 2\), and we
characterise the Traitors' optimal deviation timing in the late-game phase.

\subsection*{Contributions and outline}

This paper makes four contributions. First, we introduce the Vote-Left
protocol and show that, combined with a punishment rule, it constitutes a
Perfect Bayesian Equilibrium for every state with \(n_t > 2m_t + 2\)
(Section~\ref{sec:equilibrium}). Second, we characterise the late-game
phase \(n_t \leq 2m_t + 2\) in which Traitor deviation becomes rational,
and define the optimal late-game strategy \(\sigma^\dagger\)
(Section~\ref{sec:timing}). Third, we show that Vote-Left raises the
Faithful's winning probability by a factor of approximately three relative
to random voting under collusion. Fourth, we identify three recurrence
relationships (one from~\cite{migdal2010}) that give the Traitor win probability
under different play strategies.

We release \texttt{backstab},
an open-source Python library. It provides exact
rational arithmetic for all three recurrences
(Propositions~\ref{prop:migdal}--\ref{prop:vlopt}) and a Monte Carlo
engine for simulating a Traitors game under any combination of
voting strategies. Releasing documented, tested software alongside a
paper is now regarded as a component of good scientific practice
\cite{wilson2014,taschuk2017}, and all numerical results in this paper
are fully reproducible from the library and the accompanying scripts.
The library is available on PyPI (\texttt{pip install backstab}) and
the source is hosted at
\url{https://github.com/drvinceknight/backstab}. The underlying data
are archived on Zenodo~\cite{knight_2026_20118866}.

A typical season features $n \in \{20, \ldots, 25\}$ players with
$m \in \{3, 4\}$ Traitors. Across approximately 80 international seasons,
across approximately 30 countries,
the Traitors
Wiki\footnote{\url{https://thetraitors.fandom.com/wiki/List_of_Winners}}
records a 59\% Traitor win rate as shown in Figure~\ref{fig:outcomes_by_country}. 
Under mutual random play, which is known to be
optimal for the Faithful, the Migda\l{}
recurrence~\eqref{eq:migdal} gives a Traitor win rate $w(n, m)$ between
$0.624$ (at $n = 25$, $m = 3$, exactly $2{,}110{,}959/3{,}380{,}195$) and
$0.864$ (at $n = 22$, $m = 4$, exactly $311{,}471/360{,}448$) across the
televised configurations. The gap between the empirical 58\% and the
predicted 62--86\% plausibly reflects partial information from social
discussion, yet the Traitor advantage clearly persists. Vote-Left removes
the vote-manipulation channel while leaving the rest of the social
dynamics of the show intact.

\begin{figure*}[htbp!]
    \begin{center}
        \includegraphics[width=\textwidth]{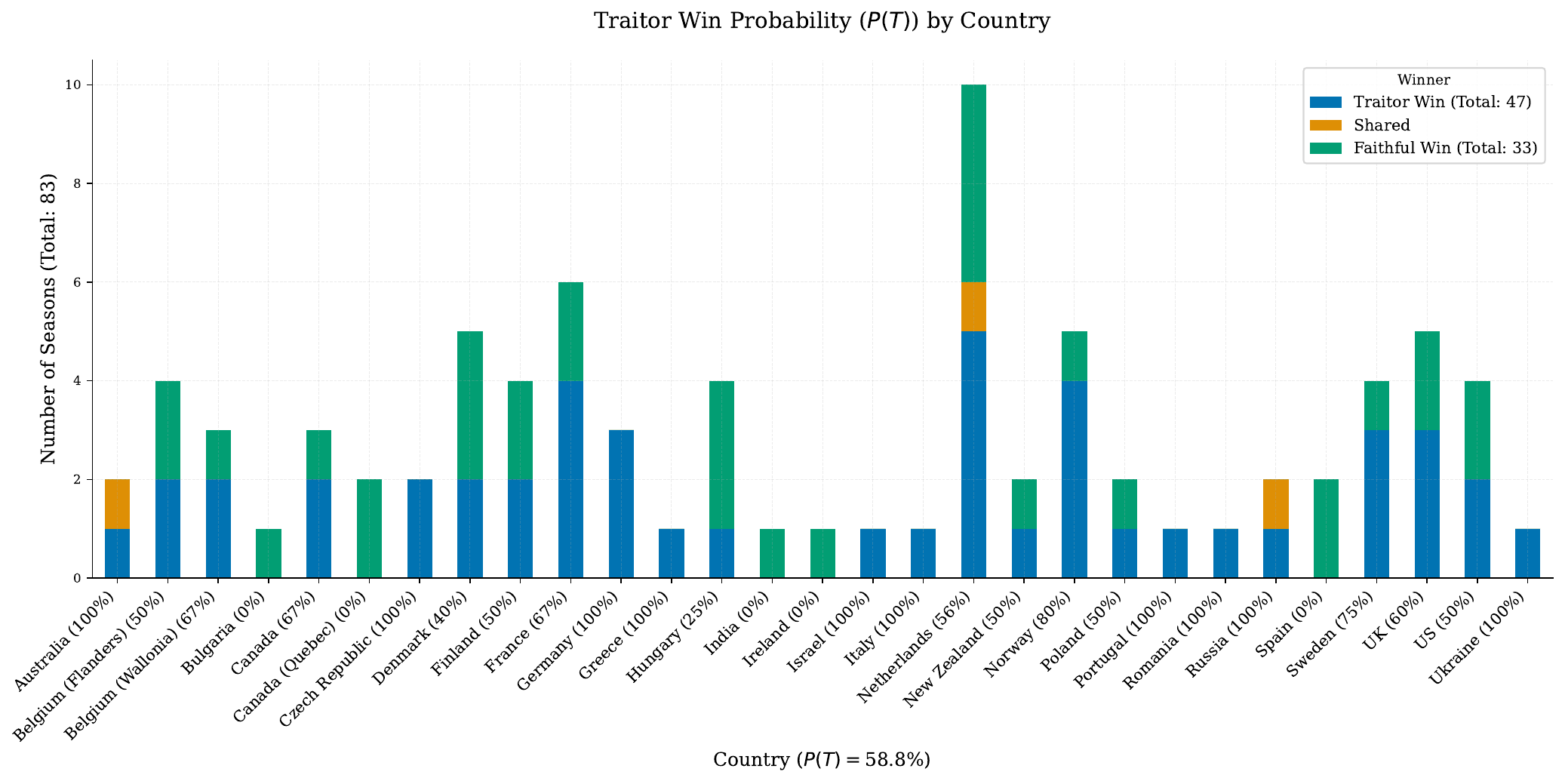}
        \caption{\textbf{Outcomes by country, 2021--2026.}
        More than 80 seasons of the show have been played across approximately
        30 countries. The overall win rate shows a clear advantage to the
        Traitors.
        }
        \label{fig:outcomes_by_country}
    \end{center}
\end{figure*}

\section{Model}\label{sec:model}

We define the Traitors Game \(\mathrm{TG}(n,m)\) as
follows. A set of \(n\) players is arranged in a fixed cyclic ordering. A
subset of \(m\) of them are Traitors; the remaining \(n - m\) are Faithful.
The Traitors know one another, while the Faithful know only their own role.
Play alternates between a day phase (a public vote, with the most-voted
player banished, ties broken uniformly at random, and the banished player's
role revealed) and a night phase (the Traitors murder one Faithful). The
Faithful win when all Traitors have been banished; the Traitors win as soon
as parity is reached (\(|T| \geq |F|\)). We suppress elements of the television
show such missions, shields, and
recruitment and the end game, all of which are discussed briefly in
Section~\ref{sec:discussion}.
Figure~\ref{fig:game} summarises the game structure and contrasts random
voting with Vote-Left.

\begin{figure*}[htbp!]
    \begin{center}
        \includegraphics[width=\textwidth]{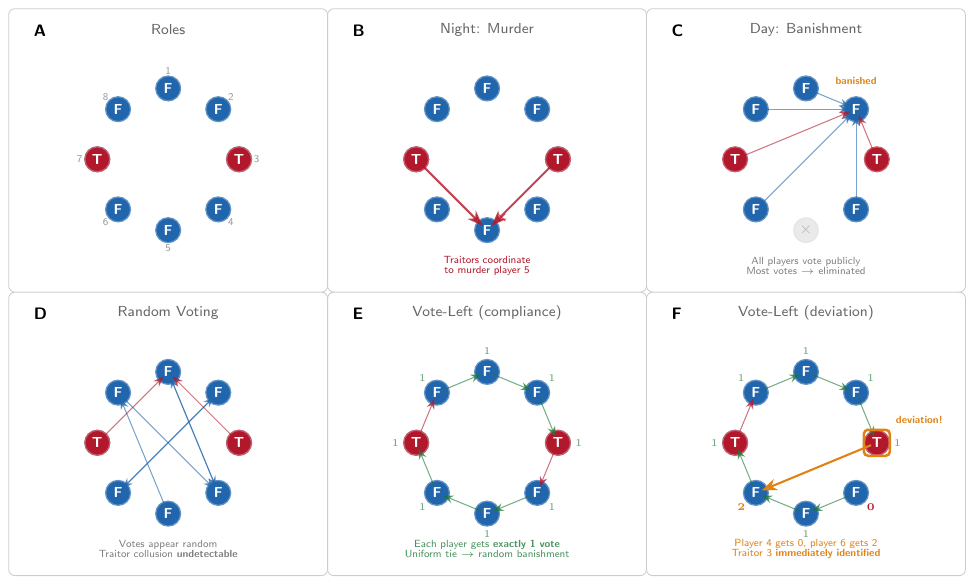}
        \caption{\textbf{The Traitors game and voting strategies.}
        \textbf{A,} Eight players (F for Faithful in blue, T for Traitor in red)
        with two Traitors. \textbf{B,} Night phase: the
        Traitors coordinate to murder a Faithful (player 5).
        \textbf{C,} Day phase: the public vote leads to banishment.
        \textbf{D,} Under random voting, Traitor collusion (the two
        red arrows converging on player 1) is indistinguishable from
        random noise. \textbf{E,} Under Vote-Left, each player votes
        for the next in the cycle, and every player receives exactly
        one vote. \textbf{F,} A Traitor who deviates from Vote-Left
        breaks the uniform count (player 0 receives 0, player 2
        receives 2), making the deviation immediately identifiable.}
        \label{fig:game}
    \end{center}
\end{figure*}

\begin{assumption}[Individual maximisation]
All players maximise their individual winning probability.
\end{assumption}

\begin{remark}[Win condition and tie-breaking]\label{rem:boundary}
We use the parity win condition (\(|T| \geq |F|\), i.e.\ \(2m \geq n\)),
appropriate for \emph{The Traitors}' rules.  Migda\l{} \cite{migdal2010}
uses the strict-majority condition (\(m > n - m\)); the two agree for odd
\(n\) and differ only at exact ties for even \(n\).  We assume uniform
random tie-breaking; in practice, the show uses a revote followed by a
random draw.
\end{remark}

A \textbf{strategy} for player \(i\) maps observable game history to a
distribution over surviving players (excluding themself). We write \(L(i)\) for
the next surviving player after~\(i\) in the cyclic ordering.

\subsection{Baseline winning probability}

Equilibrium play through the Vote-Left protocol with punishment is established
in Theorem~\ref{thm:pbe}. The
propositions are supporting technical results: exact recurrence
formulas for the Traitor win probability under each strategy profile,
along with the strategic constraints that govern when each strategy is
optimal.

The following recurrence, due to Migda\l{}~\cite{migdal2010}, gives the
exact Traitor win probability under mutual random play. We restate it here
for completeness, using the parity win condition of
Remark~\ref{rem:boundary}, and provide a short proof for the reader's
convenience.

\begin{proposition}[Migda\l{} recurrence, adapted]\label{prop:migdal}
Let \(w(n,m)\) denote the Traitor win probability under mutual random play from
state \((n,m)\).  Then
\begin{equation}\label{eq:migdal}
w(n,m) = \begin{cases}
  0 & \text{if } m = 0, \\
  1 & \text{if } n \leq 2m, \\[4pt]
  \dfrac{n-m}{n}\,w(n-2,m) \;+\; \dfrac{m}{n}\,w(n-2,m-1) & \text{if } n > 2m.
\end{cases}
\end{equation}
\end{proposition}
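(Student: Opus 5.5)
The plan is a first-step (Markov) analysis of one full round, day phase then night phase, starting from a state $(n,m)$ with $n>2m$ and $m\ge 1$.

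\textbf{Boundary cases.} If $m=0$, all Traitors have been banished and the Faithful have already won, so $w(n,0)=0$. If $n\le 2m$, then $|T|=m\ge n-m=|F|$, so parity holds and the Traitors have already won, so $w(n,m)=1$. This agrees with the first two cases of \eqref{eq:migdal}, and these states are terminal.

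\textbf{Banishment distribution.} The central step is to show that under mutual random play the banished player is uniformly distributed over the $n$ survivors. I would argue by exchangeability. Each player votes uniformly among the other $n-1$ survivors, independently. Ties are broken uniformly at random. The joint law of the vote profile is therefore invariant under every permutation of the players. Since the tie-breaking rule is also symmetric, the banishment outcome is invariant too, so each survivor is banished with probability $1/n$. Under mutual random play the Traitors vote like everyone else, so roles do not enter. It follows that a Traitor is banished with probability $m/n$ and a Faithful with probability $(n-m)/n$.

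\textbf{Night phase and recursion.} Next I would follow the state through the night. If a Traitor is banished, the state becomes $(n-1,m-1)$. If $m-1=0$ the game ends, consistent with $w(\cdot,0)=0$. Otherwise a night murder follows, giving $(n-2,m-1)$. This is valid because $n-1>2(m-1)$ shows that parity has not been reached. If a Faithful is banished, the state becomes $(n-1,m)$. Should $n-1\le 2m$, the Traitors have won. Since $n>2m$, this happens exactly when $n=2m+1$, and then $w(n-2,m)=1$ holds as well because $n-2\le 2m$. So it is harmless to record the next state as $(n-2,m)$. Otherwise the Traitors murder a Faithful and reach $(n-2,m)$.

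Conditioning on the first banishment, by the strong Markov property of this state process, gives
\[
w(n,m)=\tfrac{n-m}{n}\,w(n-2,m)+\tfrac{m}{n}\,w(n-2,m-1).
\]
The remaining detail is to check that the boundary conventions absorb the early-termination cases consistently. For example, when $m=1$ and the Traitor is banished, the recursion uses $w(n-2,0)=0$. Induction on $n$ then makes $w$ well defined.

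\textbf{Main obstacle.} The only nontrivial point is the uniformity of banishment under plurality voting with random ties. Exchangeability settles it cleanly, but one must confirm two things. First, "random play" for Traitors means they do not coordinate. Second, night murders under random play choose a Faithful uniformly, although which Faithful is murdered does not affect the state $(n-2,\cdot)$.
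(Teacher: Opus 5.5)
Your proof is correct and follows the same first-step analysis as the paper: uniform day-phase banishment, a night murder, then conditioning on the round's outcome via the Markov property. Your exchangeability argument for the uniform banishment and your check that intra-round terminations (at $m=1$ or $n=2m+1$) are absorbed by the boundary values fill in steps the paper simply asserts.
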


\begin{proof}[Proof (following Migda\l{}~\cite{migdal2010})]
The boundary cases follow directly from the win conditions.  For interior
states, one full round consists of a day phase followed by a night phase.

\textit{Day phase.}  Under mutual random play each surviving player is equally
likely to be banished.  With probability \(m/n\) the banished player is a Traitor,
    reducing the state to \((n-1, m-1)\) (Figure~\ref{fig:migdal_recursion} - B); with probability \((n-m)/n\) it is a
Faithful, reducing the state to \((n-1, m)\) (Figure~\ref{fig:migdal_recursion} - D).

\textit{Night phase.}  The Traitors murder one Faithful, decreasing \(n\) by a
further one while leaving \(m\) unchanged.  After both phases the state is
\((n-2, m-1)\) (Figure~\ref{fig:migdal_recursion} - C) or \((n-2, m)\) (Figure~\ref{fig:migdal_recursion} - E), respectively.

Applying the Markov property and taking expectations gives~\eqref{eq:migdal}.
The recurrence terminates because \(n\) decreases by two each round and the
boundary is eventually reached.
\end{proof}

We now define the cyclic voting protocol referred to as Vote-Left.

\begin{figure}[htbp!]
\centering
\includegraphics[width=\textwidth]{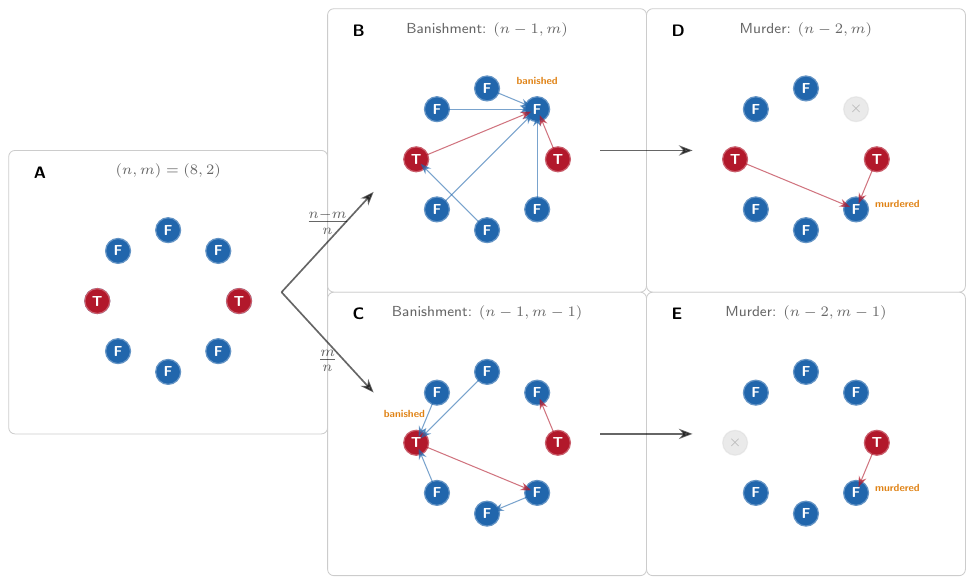}
    \caption{\textbf{Recursion relationship for \(w(n, m)\).}
        \textbf{A,} Initial state for \(w(8, 2)\), given random voting the
        next step could either be \textbf{B,} or \textbf{D}. 
        This leads immediately to \textbf{C,} or \textbf{E} respectively.}
\label{fig:migdal_recursion}
\end{figure}

\begin{definition}[Cyclic voting protocol]
    A \emph{cyclic voting protocol} prescribes that each player votes for a
    target determined by a \textbf{shared} fixed single-cycle permutation of the
    surviving players. The \emph{Vote-Left protocol} is the instance where each
    player votes for \(L(i)\).
\end{definition}

\begin{observation}[Uniform marginal and observability]\label{obs:main}

Under full compliance: (i)~every player receives exactly one vote, producing a
uniform \(1/n_t\) banishment probability, identical to random voting; (ii)~any
deviation is immediately and publicly identifiable, since prescribed votes are
deterministic functions of public information.

\end{observation}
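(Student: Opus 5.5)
The plan is to treat the two parts of Observation~\ref{obs:main} separately. Both reduce to elementary properties of the map \(L\) restricted to the surviving set \(S_t\), with \(|S_t| = n_t\).

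For (i), I first show that \(L\), restricted to \(S_t\), is a bijection \(S_t \to S_t\) with no fixed points whenever \(n_t \geq 2\). Removing the eliminated players from the fixed cyclic ordering leaves a cyclic ordering of \(S_t\), and \(L\) sends each element to its successor in that ordering. This makes \(L\) a single \(n_t\)-cycle, so it is a derangement. The same argument covers any shared single-cycle permutation \(\pi\) in the general cyclic voting protocol, since a single cycle of length \(n_t \geq 2\) has no fixed points.

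Bijectivity means every \(j \in S_t\) has exactly one preimage, so \(j\) receives exactly one vote. The vote count is therefore the constant vector \((1,\dots,1)\), and all \(n_t\) players are tied. Under the uniform random tie-breaking of Remark~\ref{rem:boundary}, each player is banished with probability \(1/n_t\). It remains to identify this with random voting. The proof of Proposition~\ref{prop:migdal} uses only the fact that each surviving player is banished with probability \(1/n_t\); by symmetry of the uniform vote distribution, the same marginal holds under mutual random play. The day-phase transition probabilities \(m_t/n_t\) and \((n_t - m_t)/n_t\) therefore coincide, which is the sense of ``identical'' in the statement.

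For (ii), the argument rests on what is public. The cyclic ordering, the history of eliminations, and hence \(S_t\) are all common knowledge, so \(L(i)\) is common knowledge for every \(i\). Votes are cast individually and observed publicly, as in the model of Wang~\cite{wang2024}. Any player \(i\) whose cast vote \(v_i\) differs from \(L(i)\) is therefore exposed by the comparison \(v_i \neq L(i)\), and no inference about roles is needed for this. I would also note the aggregate signature of a deviation: if \(i\) deviates, \(L(i)\) receives zero votes and \(v_i\) receives two, as in Figure~\ref{fig:game}F. The deviation is thus detectable even from the tallies alone, although the identity of the deviator follows from the individual votes.

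The main obstacle is making ``identical to random voting'' precise. Random voting does not produce a uniform vote vector; only its banishment marginal is uniform, and that claim needs a symmetry argument over player labels given role-blind uniform votes. I would state clearly that only the marginal coincides, not the joint distribution of tallies. A minor edge case is \(n_t = 1\), where the game has already ended, so I would restrict attention to \(n_t \geq 2\).
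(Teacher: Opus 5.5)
Your proposal is correct and is essentially the paper's own justification made explicit. The paper gives no separate proof: it rests (i) on every player receiving exactly one vote under the cyclic successor map (Figure~\ref{fig:game}E), and (ii) on prescribed votes being deterministic functions of public information (Figure~\ref{fig:game}F). Your bijection and uniform-tie-breaking argument, your symmetry argument for the random-voting marginal, and your vote-by-vote comparison with \(L(i)\) formalise exactly these points.

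One small caution concerns your tally-only signature. It detects a single deviator but not deviations in general, since two non-adjacent deviators swapping targets leave every count at one. Identification should therefore rest, as your main argument already does, on the individual public votes.
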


Random voting achieves (i) but not (ii): a Traitor who concentrates their
vote is indistinguishable from a citizen who happened to draw that target
at random. Vote-Left achieves both properties simultaneously.

\section{Equilibrium Analysis}\label{sec:equilibrium}

\subsection{Strategy profile}

\begin{definition}[Vote-Left with Punishment]\label{def:equil}

Vote-Left with Punishment is the strategy profile \(\sigma^*\), governed by a
public state \(s \in \{\mathtt{comply},\, \mathtt{punish}(j)\}\) common to all
players:
\begin{itemize}[leftmargin=1.4em,itemsep=1pt,topsep=2pt]
    \item In state \(\mathtt{comply}\), every surviving player \(i\) votes for
  \(L(i)\). If a deviation is observed on the vote, the state transitions to
  \(\mathtt{punish}(j)\), where \(j\) is the deviator. In the rare event that
  two or more players deviate in the same round, \(j\) is taken to be the
  deviator with the smallest index in the fixed cyclic ordering, and any
  remaining deviators are punished in subsequent rounds using the same rule.
\item In state \(\mathtt{punish}(j)\), every surviving player votes for
  \(j\); after \(j\)'s banishment, the state returns to \(\mathtt{comply}\).
\end{itemize}
Traitors play Vote-Left in \(\mathtt{comply}\) and murder a uniformly random
Faithful at night.
\end{definition}

\subsection{Deviation payoff analysis}

We compare a player's exact expected payoffs under compliance and under
deviation from \(\sigma^*\). 

\begin{lemma}[Faithful incentive compatibility]\label{lemma:faithful}
Under \(\sigma^*\), a Faithful player has no profitable deviation.
Equivalently, any deviation results in immediate detection and punishment
with banishment, rendering the deviating player unable to win.
\end{lemma}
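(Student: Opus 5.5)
The plan is a direct one-shot deviation argument. Fix a round \(t\) in state \(\mathtt{comply}\) and a Faithful player \(i\), and suppose every other player follows \(\sigma^*\).

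\textbf{Step 1: any deviation is detected.} By Observation~\ref{obs:main}(ii), the prescribed vote \(L(i)\) is a deterministic function of public information. Hence any vote by \(i\) for some target other than \(L(i)\) is publicly recorded as a deviation. If \(i\) is the only deviator, the state moves to \(\mathtt{punish}(i)\). If several players deviate, the tie rule in Definition~\ref{def:equil} places \(i\) in a queue of deviators, and each is punished in a later round.

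\textbf{Step 2: punishment banishes \(i\).} In \(\mathtt{punish}(i)\), every surviving player votes for \(i\). This includes all Traitors, since \(\sigma^*\) is the profile under test. So \(i\) receives \(n_t-1\) votes, every other player receives at most one (namely \(i\)'s own vote), and \(i\) is banished with certainty. A banished Faithful cannot win under individual maximisation, so the payoff is \(0\). I also need to check that \(i\) does not escape by being murdered, which is payoff-equivalent anyway, or by the game ending first. If the Traitors reach parity first, \(i\) loses regardless. If the deviation itself caused the last Traitor to be banished, the game ends with a Faithful win. That last case is the subtle one.

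\textbf{Step 3: the comparison.} Compliance yields the continuation value of \(\sigma^*\), which is nonnegative. Deviation yields at most \(0\), except in the edge case just flagged. So deviation is weakly unprofitable, and strictly so whenever the compliance value is positive. In this model a Faithful's win probability is the team win probability conditional on \(i\) surviving to the end, and this is positive in any interior state.

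\textbf{Main obstacle: the edge case.} The hard step is the case where \(i\)'s deviant vote alters the current-round banishment onto a Traitor, for example to end the game immediately. A single deviant vote moves one vote from \(L(i)\) to some target \(k\). Then \(k\) has two votes, every other player except \(L(i)\) has one, and \(L(i)\) has none. So \(k\) is banished outright, and \(i\) can pick the banishee.

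However, a Faithful has no information about roles beyond their own. Under \(\sigma^*\), their posterior puts Traitors uniformly among the others. So choosing \(k\) gives the same \(m_t/(n_t-1)\)-type chance of hitting a Traitor as a neutral lottery, while guaranteeing \(i\)'s own banishment next round. I will argue that this trade is dominated. Under compliance, the banishment distribution is uniform over all \(n_t\) players including \(i\), and \(i\) remains alive. Under deviation, \(i\) is banished with certainty at the next day phase, or the game ends.

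I will handle this by conditioning on whether \(k\) is the last Traitor (\(m_t=1\)). If \(m_t=1\) and \(k\) is the Traitor, the Faithful win immediately. I will therefore either interpret ``unable to win'' as excluding this measure-zero-for-strategy informational gain, or show by the uniform posterior that the expected payoff is no larger than under compliance. That uniformity computation, comparing \(1/(n_t-1)\) from a targeted guess against the continuation value \(1-w\) under \(\sigma^*\), is the step I expect to require care.
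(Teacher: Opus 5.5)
Your Steps 1--3 reproduce the paper's argument: public detection, transition to $\mathtt{punish}(i)$, certain banishment, and payoff $0$ against a strictly positive compliance value. The edge case you flag is genuine. The paper's proof never accounts for the deviant vote deciding the \emph{current} banishment. It even asserts that under compliance $i$ is not banished this round, whereas the $n_t$-way tie banishes $i$ with probability $1/n_t$. The gap is that your proposal stops exactly where the work lies, and neither exit you offer closes it. The ``measure-zero'' reinterpretation is unavailable. When $m_t=1$, the targeted player is the last Traitor with probability $1/(n_t-1)>0$ under the uniform posterior; the game then ends before $\mathtt{punish}(i)$ can fire, and $i$ wins. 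This also falsifies the ``unable to win'' clause. Nor can the ``dominated'' route work in every state. At $(n_t,m_t)=(3,1)$, compliance gives $i$ winning probability $1/3$ and deviation gives $1/2$. At $(4,1)$, compliance gives $1/4$ (if another Faithful is banished, parity follows the murder) and deviation gives $1/3$. So the lemma, as stated and proved ``in any state'', is false. You must add the hypothesis $n_t>2m_t+2$, which is the only region where Theorem~\ref{thm:pbe} invokes it.

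The benchmark in your planned comparison is also wrong. $1-w(n_t,1)$ is the \emph{team's} win probability. Under the paper's payoff (a banished player cannot win, and you treat murder the same way), $i$'s compliance value is the probability that the Faithful win with $i$ alive, which is strictly smaller: for example $1/3$ versus $7/15$ at $(5,1)$. Beating an upper bound on the compliance value proves nothing. With the hypothesis in place, the argument closes as follows. For $m_t\ge 2$, a Traitor survives the current day and $i$ is banished the next day, so deviation yields exactly $0$. For $m_t=1$ and $n_t\ge 5$, $i$'s compliance value satisfies $v(3)=1/3$, $v(4)=1/4$ and $v(n)=\frac1n+\frac{n-3}{n}v(n-2)$. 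Hence $v(n)=1/3$ for odd $n$, while $1/3-v(n)=\frac{n-3}{n}\bigl(1/3-v(n-2)\bigr)$ gives $v(n)\ge 7/24$ for even $n\ge 6$. Both exceed the deviation value $1/(n-1)\le 1/4$.
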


\begin{proof}
Suppose Faithful player \(i\) deviates from Vote-Left in any state. Since
votes are deterministic under \(\sigma^*\), the deviation is publicly
observable. The state transitions to \(\mathtt{punish}(i)\), and in the
next day phase all players vote for \(i\), so \(i\) is banished with
probability 1. A banished player cannot win, so \(i\)'s winning probability
becomes 0. Under compliance, \(i\) is not banished in this round, so their
winning probability is strictly positive. Thus deviation is irrational.
\end{proof}

\begin{lemma}[Traitor incentive compatibility]\label{lemma:traitor}
Under \(\sigma^*\), when \(n > 2m + 2\), a Traitor player has no
profitable deviation. The punishment threat is credible: sufficient
Faithful remain to outvote and eliminate the deviator.
\end{lemma}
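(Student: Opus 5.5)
The plan is to compare, for a single Traitor at state \((n,m)\) with \(n>2m+2\), the exact winning probability under compliance with that under any one-shot deviation. By the one-shot deviation principle, together with the fact that \(\sigma^*\) depends only on the public state, it suffices to consider a single deviation on one day vote followed by play according to \(\sigma^*\). There are two kinds of deviation. In an \emph{uncoordinated} deviation a single Traitor redirects their vote. In a \emph{colluding} deviation up to \(m\) Traitors concentrate votes on one Faithful target. We treat the colluding case, since it dominates: any payoff gain available to one Traitor is at most what the coalition can secure.

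\textbf{Step 1 (what a deviation can buy).} Under compliance every player holds one vote. If \(k\le m\) Traitors move their votes onto a common Faithful target \(f\), then \(f\) receives at most \(k+1\) votes, the abandoned targets drop to \(0\), and everyone else keeps \(1\). Hence with \(k\ge1\) the deviation banishes \(f\) with certainty (or at worst with high tie-broken probability). The immediate gain is therefore at most converting the day outcome from ``uniform'' to ``a Faithful is banished'', that is, replacing the mixture \(\frac{n-m}{n}(n-1,m)+\frac mn(n-1,m-1)\) by \((n-1,m)\).

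\textbf{Step 2 (cost of punishment).} All deviators are publicly identified (Observation~\ref{obs:main}). In the following day phases the state is \(\mathtt{punish}(j)\), and every compliant player votes for \(j\). The remaining Traitors may vote elsewhere to try to save \(j\). After the night murder there are \(n-2\) players, of whom at least \(n-2-m\) are Faithful. These outvote the at most \(m\) Traitors exactly when \(n-2-m>m\), that is, when \(n>2m+2\). This is the step that uses the hypothesis, and it makes the threat credible: each deviator is banished in turn, with no vote required from any Traitor. After the punishment phases, the deviating coalition has lost \(k\) Traitors in exchange for \(k\)'s worth of gain, and each deviating Traitor individually is banished and so cannot win. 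Because payoffs are individual (Assumption of individual maximisation), a deviating Traitor's winning probability drops to \(0\). The compliant probability is strictly positive, since at \(n>2m\) with \(m\ge1\) every Traitor has positive probability of surviving to parity.

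\textbf{Main obstacle.} The delicate point is whether individual payoff is really \(0\) for a Traitor who deviates. One worry is that the game ends at parity during punishment. Another is that the Traitors win as a team even if the deviator is banished. I would handle the first by checking the counts: the inequality \(n-2-m>m\) ensures parity is not reached before the punishment vote, and sequential punishment of multiple deviators keeps the Faithful majority \(n-2r-m+(\text{banished}) > m - (\text{banished})\) at each step. For the second, I would appeal to the paper's individual-win convention, under which a banished player cannot win, exactly as in Lemma~\ref{lemma:faithful}. If a team-payoff reading is needed instead, I would compare \(w\)-type values directly. Trading one banished Faithful for one banished Traitor leaves the Traitors at \((n-3,m-1)\) or worse instead of the compliant mixture. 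Monotonicity of the win probability in \(m\), obtained by induction on the recurrence of Proposition~\ref{prop:migdal}, would then show this is no improvement.
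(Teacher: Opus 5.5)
Your main argument is essentially the paper's proof, done with more care. After the deviation day and the night murder, at least \(n-2-m>m\) Faithful remain, so the deviator is banished in the punishment round. Under the convention that a banished player cannot win, the deviator's payoff drops to \(0\), while the compliance payoff is strictly positive. Your extra checks go beyond what the paper writes but are sound: the one-shot deviation reduction, other Traitors voting against the punishment, sequential punishment of several deviators, and no parity before the punishment vote. The remark that the colluding case "dominates" is unnecessary, since the punishment argument applies to any single deviator whatever the deviation was.

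Drop the team-payoff fallback in your last paragraph, because it is false, not merely unproven. At \((n,m)=(11,4)\), a single Traitor redirecting their vote banishes a Faithful with certainty, the night gives \((9,4)\), the punishment round gives \((8,3)\), and the next night gives \((7,3)\), where \(\sigma^*\) resumes compliance. The team payoff of deviating is therefore \(w(7,3)=33/35\approx 0.9429\). This exceeds the compliance value \(w(11,4)=653/693\approx 0.9423\).

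The comparison you need is between \(w(n-4,m-1)\) and \(w(n,m)\), which trades off both arguments at once. Monotonicity in \(m\) alone cannot settle it, and here it goes the wrong way. The lemma therefore depends on the individual-win convention that both you and the paper invoke.
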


\begin{proof}
Suppose Traitor \(i\) deviates in a state \((n, m)\) with \(n > 2m + 2\).
The deviation is publicly observable, so the state transitions to
\(\mathtt{punish}(i)\). After the following night phase, there are \(n - 2\)
players with \(m\) Traitors, leaving \(n - 2 - m\) Faithful. Since \(n > 2m +
2\), we have \(n - 2 > 2m\), so \(n - 2 - m > m\). The Faithful therefore
outnumber the remaining Traitors. Under \(\mathtt{punish}(i)\), all players
vote for \(i\) in the next day phase, so \(i\) is banished with probability 1.
Since banishment prevents victory, \(i\)'s winning probability is 0, whereas
under compliance it is strictly positive. Thus deviation is irrational.
\end{proof}

\begin{theorem}[Bayesian Nash Equilibrium (BNE)]\label{thm:pbe}
\(\sigma^*\) constitutes a BNE of \(\mathrm{TG}(n, m)\) for every state with
\(n > 2m + 2\).
\end{theorem}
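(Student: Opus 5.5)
The theorem follows by combining the two incentive-compatibility lemmas via the one-shot deviation principle. Fix a state \((n,m)\) with \(n>2m+2\). I first specify beliefs and then verify that no player of either type gains from a unilateral deviation.

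\textbf{Beliefs.} The Traitors know all roles, so they need no beliefs. A Faithful player's beliefs are obtained by Bayes' rule from the uniform prior over role assignments, updated on banishment reveals. On the equilibrium path, votes under \(\sigma^*\) are deterministic functions of the cyclic order and carry no information about roles. Hence beliefs are pinned down solely by the revealed roles of banished and murdered players. Off the path, the punishment rule in Definition~\ref{def:equil} is role-independent: everyone votes for the deviator \(j\). So the prescribed actions do not depend on how beliefs are specified off the path, and any consistent specification will do.

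\textbf{No profitable one-shot deviation.} Within the comply phase, I compute each player's payoff from conforming versus a single deviation at some history, with conformity thereafter.
\begin{itemize}[leftmargin=1.4em,itemsep=1pt,topsep=2pt]
\item For a Faithful player, Lemma~\ref{lemma:faithful} gives a deviation payoff of \(0\) and a strictly positive payoff under compliance.
\item For a Traitor at a state with \(n>2m+2\), Lemma~\ref{lemma:traitor} gives the same comparison.
\end{itemize}
The strict positivity of the compliance payoff needs a short check, using the recurrence from Proposition~\ref{prop:migdal}. Compliance yields uniform \(1/n\) banishment by Observation~\ref{obs:main}, so the Faithful win with positive probability and so do the Traitors. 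Individual winning means surviving or being on the winning team. I will interpret payoffs as the team winning probability conditional on the player not being banished, so the argument should fix this convention explicitly.

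Two further deviations need handling.
\begin{itemize}[leftmargin=1.4em,itemsep=1pt,topsep=2pt]
\item \emph{Murder choice.} A Traitor could deviate in the night phase by choosing a non-uniform murder target. This is unobservable as a vote deviation, but murders are Traitor-team actions that do not affect the day-vote distribution under \(\sigma^*\). The continuation value therefore depends only on \((n,m)\), and any murder choice yields the same payoff. Indifference suffices for equilibrium.
\item \emph{Deviation during punishment.} A player \(k\neq j\) could deviate in \(\mathtt{punish}(j)\). This is itself detected and queued by the smallest-index rule. A single deviating vote cannot overturn a unanimous vote on \(j\), because the remaining voters still give \(j\) at least \(n-2\) votes. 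So \(j\) is banished anyway and \(k\) is then punished, which gives \(k\) payoff \(0\) as before.
\end{itemize}

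\textbf{Main obstacle.} The delicate step is credibility along the continuation. Lemma~\ref{lemma:traitor} argues one round ahead, but after a deviation at \((n,m)\) the game moves to \((n-1,m)\) and then \((n-2,m)\). Later states may violate \(n>2m+2\), so the theorem is only claimed for states satisfying it. I would restrict the statement to the deviation decision at the current state, which is what the lemmas support. I would then note that punishment is carried out by everyone, including the other Traitors. If co-Traitors refused, punishment would still succeed as long as the Faithful strictly outnumber the Traitors when the punishment vote occurs, which is exactly the inequality \(n-2-m>m\) derived in the lemma. This ensures the threat needs no Traitor cooperation. Having the two lemmas cover all player types then closes the proof.
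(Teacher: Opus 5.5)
Your proposal is essentially the paper's argument: the paper's proof simply combines Lemma~\ref{lemma:faithful} and Lemma~\ref{lemma:traitor}, and your restriction to deviations at a state with \(n>2m+2\) is exactly what those lemmas establish. Your treatment of beliefs, murder choices and punishment-phase deviations is extra detail that the paper omits. One slip to fix: the ``deviation gives \(0\)'' step needs the payoff to be the joint probability that the player is not eliminated \emph{and} their team wins. It cannot be the team-win probability conditional on not being banished, which is undefined for a deviator banished with probability one. Likewise, ``surviving or being on the winning team'' should read ``and''.
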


\begin{proof}
Sequential rationality follows from Lemmas~\ref{lemma:faithful}
and~\ref{lemma:traitor}: no player can profitably deviate given the prescribed
strategies.
\end{proof}

\section{Traitor deviation timing}\label{sec:timing}

The equilibrium of Section~\ref{sec:equilibrium} characterises rational
behaviour in every state satisfying \(n_t > 2m_t + 2\). In the
complementary \emph{late-game phase}, \(n_t \leq 2m_t + 2\), the Faithful
no longer have enough members to guarantee punishment after a night murder,
and Lemma~\ref{lemma:traitor} no longer applies.

\textbf{Late-game structure.} When \(n_t \leq 2m_t + 2\), a Traitor who
deviates is detected and placed in state \(\mathtt{punish}(i)\). After the
following night murder there are at most \(n_t - 2 - m_t \leq m_t\)
Faithful remaining. The Faithful can no longer outvote the Traitors in the
punishment round, so the punishment threat ceases to be credible and
deviation becomes individually rational.

\textbf{Optimal late-game strategy.} We define \(\sigma^\dagger\) as the
strategy that complies with Vote-Left whenever \(n_t > 2m_t + 2\) and
deviates via full collusion when \(n_t \leq 2m_t + 2\). For
the televised range \(m_t \in \{2, 3, 4\}\) the late-game threshold is
\(n_t \leq 6\), \(8\), or \(10\) respectively, corresponding to one to
three rounds near the end of a typical season. We assess the empirical
effect of \(\sigma^\dagger\) in Section~\ref{sec:sim}.

\subsection{Collusion banishment probability}
\label{sec:collusion-banishment}

\textbf{Random voting with collusion.} We define \(\sigma^\ddagger\) as the
strategy profile in which Traitors always collude by voting for the same
Faithful target, while Faithful players vote uniformly at random. This profile
serves as the collusion baseline: it captures the maximum advantage Traitors
can extract under random voting, and we compare Vote-Left outcomes against it
throughout.

For strategy profiles in which the Traitors all vote for the same Faithful
target \(T^* \in F\) (in particular, \(\sigma^\ddagger\), RV+C),
Migda\l{}-style recurrences for the Traitor win probability require the
probability that the day-phase banishment lands on a Faithful.

\begin{lemma}[Collusion banishment probability]\label{lemma:collusion}
Let \(n \in \mathbb{N}\) be the total number of players and let
\(m \in \{0,\dots,n\}\) denote the number of Traitors. Define the
Faithful set
\[
F=\{0,\dots,n-m-1\},
\]
and suppose all Traitors play \(\sigma^\ddagger\), colluding by voting
for the same Faithful target \(T^\ast = 0\).

Each Faithful player votes independently and uniformly at random among
all players except themselves. Let
\[
\Omega
=
\left\{
v=(v_i)_{i\in F}\in \{0,\dots,n-1\}^{n-m}
:
v_i\neq i
\text{ for all } i\in F
\right\}
\]
denote the set of admissible Faithful voting profiles.

For \(v\in\Omega\), define the vote count received by player \(j\) as
\[
C_j(v)
=
m\mathbf 1_{\{j=T^\ast\}}
+
\sum_{i\in F}\mathbf 1_{\{v_i=j\}}.
\]

Let
\[
M(v)=\max_{j\in\{0,\dots,n-1\}} C_j(v)
\]
and define the set of tied winners
\[
W(v)=\{j : C_j(v)=M(v)\}.
\]

If ties are resolved uniformly at random, then the probability that the
eliminated player is Faithful under \(\sigma^\ddagger\) is
\begin{equation}\label{eq:coll_banish}
p_F^{\ddagger}(n,m)
=
\frac1{|\Omega|}
\sum_{v\in\Omega}
\frac{|W(v)\cap F|}{|W(v)|}.
\end{equation}
\end{lemma}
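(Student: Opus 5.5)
The identity~\eqref{eq:coll_banish} is a direct application of the law of total probability. I would therefore build the probability space explicitly and then condition on the Faithful voting profile.

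\textbf{Step 1: the voting distribution.} Under \(\sigma^\ddagger\) every Traitor votes for \(T^\ast=0\) with certainty. All the randomness in the votes thus comes from the Faithful profile \(v=(v_i)_{i\in F}\). Each Faithful \(i\) picks \(v_i\) uniformly from the \(n-1\) players other than themself, and the choices are independent. Hence every \(v\in\Omega\) has probability
\[
\prod_{i\in F}\frac{1}{n-1}=\frac{1}{(n-1)^{n-m}}=\frac{1}{|\Omega|}.
\]
Here \(|\Omega|=(n-1)^{n-m}\), because each coordinate ranges over exactly \(n-1\) admissible values. So the profile is uniform on \(\Omega\).

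\textbf{Step 2: vote counts given a profile.} Fix \(v\in\Omega\). The total vote received by player \(j\) is the \(m\) Traitor votes, if \(j=T^\ast\), plus the Faithful votes landing on \(j\). This is exactly \(C_j(v)\), so the most-voted set is \(W(v)\), which is nonempty because \(M(v)\) is attained. With uniform tie-breaking, the eliminated player is uniform on \(W(v)\) conditional on \(v\). The tie-breaking randomness must be independent of \(v\); I would state this explicitly as part of the model. It follows that
\[
\Pr(\text{eliminated}\in F\mid v)=\frac{|W(v)\cap F|}{|W(v)|}.
\]

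\textbf{Step 3: average over profiles.} Summing over \(v\in\Omega\) with the weights \(1/|\Omega|\) from Step 1 gives~\eqref{eq:coll_banish}.

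\textbf{Main obstacle: edge cases.} These are bookkeeping rather than mathematics. The statement fixes \(T^\ast=0\in F\), which requires \(F\neq\emptyset\), i.e.\ \(m<n\). For \(m=n\) (or \(n=1\), where \(\Omega\) is empty or degenerate) the formula is vacuous or trivially \(0\). I would note that these states lie outside the interior region \(n>2m\) where the lemma is used. The labelling convention is harmless: relabelling players is a bijection of \(\Omega\) preserving the uniform measure, so placing the Faithful at \(\{0,\dots,n-m-1\}\) and choosing target \(0\) loses no generality.
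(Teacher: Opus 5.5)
Your proof is correct and follows the paper's argument: a uniform measure on \(\Omega\), then the conditional Faithful-banishment probability \(|W(v)\cap F|/|W(v)|\) from uniform tie-breaking, then averaging over profiles. The paper leaves implicit a few things you make explicit, namely the computation \(|\Omega|=(n-1)^{n-m}\) that justifies uniformity, the independence of tie-breaking from \(v\), and the degenerate cases \(m=n\) and \(n=1\).
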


\begin{proof}
Under \(\sigma^\ddagger\), every admissible Faithful voting profile
\(v\in\Omega\) occurs with equal probability. Since each Faithful player
may vote for any player except themselves, the sample space is precisely
\(\Omega\), and hence
\[
\mathbb{P}(v)=\frac1{|\Omega|}
\qquad
\text{for all } v\in\Omega.
\]

For a fixed profile \(v\), the total number of votes received by player
\(j\) is
\[
C_j(v)
=
m\mathbf 1_{\{j=T^\ast\}}
+
\sum_{i\in F}\mathbf 1_{\{v_i=j\}},
\]
where the first term accounts for the colluding Traitor bloc and the
second term counts Faithful votes.

The players receiving the maximal number of votes are
\[
W(v)=\{j : C_j(v)=M(v)\},
\]
where \(M(v)=\max_j C_j(v)\). Because ties are broken uniformly at
random, conditional on the profile \(v\), the probability that the
eliminated player is Faithful equals \(|W(v)\cap F|/|W(v)|\). Taking
the expectation over all admissible voting profiles and substituting
\(\mathbb{P}(v)=1/|\Omega|\) gives~\eqref{eq:coll_banish}.
\end{proof}

\subsection{Recurrences for \(\sigma^\ddagger\) and \(\sigma^\dagger\)}\label{sec:recurrences}

Lemma~\ref{lemma:collusion} replaces the uniform banishment probability
\((n - m)/n\) of the Migda\l{} recurrence
(Proposition~\ref{prop:migdal}) with the strategy-specific probability
\(p_F^{\ddagger}\) from~\eqref{eq:coll_banish}. Adapting the recurrence
accordingly characterises the Traitor win probability under
\(\sigma^\ddagger\) and under \(\sigma^\dagger\) exactly.

\begin{proposition}[Traitor win probability under \(\sigma^\ddagger\)]\label{prop:rvc}
Let \(w_{\ddagger}(n,m)\) denote the Traitor win probability under random
voting with collusion (\(\sigma^\ddagger\), RV+C) from state \((n, m)\). Then
\begin{equation}\label{eq:rvc}
w_{\ddagger}(n,m)
=
\begin{cases}
0 & m = 0, \\[6pt]
1 & n \leq 2m, \\[6pt]
p_F^{\ddagger}(n,m)\, w_{\ddagger}(n-2,m)
+ \bigl(1 - p_F^{\ddagger}(n,m)\bigr)\, w_{\ddagger}(n-2,m-1)
& n > 2m.
\end{cases}
\end{equation}
\end{proposition}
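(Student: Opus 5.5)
The plan is to mirror the proof of Proposition~\ref{prop:migdal}, changing only the day-phase transition probabilities. First the boundary cases. If \(m=0\) no Traitors remain, so the Faithful have won and \(w_{\ddagger}(n,0)=0\). If \(n\le 2m\), then \(|T|\ge|F|\) and parity already holds, so \(w_{\ddagger}=1\).

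For an interior state \((n,m)\) with \(n>2m\), I will condition on the outcome of one full round under \(\sigma^\ddagger\). The key observation is that the state at the start of each round is summarised by the pair \((n,m)\). This holds because:
\begin{itemize}
\item Faithful votes are uniform and independent of history;
\item Traitors always collude on some Faithful target;
\item the night murder removes a Faithful whose identity does not affect future transition probabilities.
\end{itemize}
By relabelling symmetry, the collusion target may be taken to be \(T^\ast=0\) without loss of generality. The process is therefore a Markov chain on \((n,m)\), exactly as in the Migda\l{} argument.

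\textit{Day phase.} By Lemma~\ref{lemma:collusion}, the banished player is Faithful with probability \(p_F^{\ddagger}(n,m)\), giving state \((n-1,m)\). Otherwise a Traitor is banished, with probability \(1-p_F^{\ddagger}(n,m)\), giving \((n-1,m-1)\). Since \(W(v)\subseteq F\cup T\), these two events are complementary.

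\textit{Night phase.} I then check that the night murder actually occurs, i.e.\ that the game has not ended after the day phase.
\begin{itemize}
\item After a Faithful banishment the state is \((n-1,m)\). Because \(n>2m\), we have \(n-1\ge 2m\), so \(n-1-m\ge m\). Equality is possible, meaning parity is reached immediately. In that case \(w_{\ddagger}(n-1,m)=1\). Consistency with the recurrence follows since \(n-2<2m\) gives \(w_{\ddagger}(n-2,m)=1\) as well.
\item After a Traitor banishment the state is \((n-1,m-1)\). If \(m-1=0\) the Faithful have won, and \(w_{\ddagger}(n-2,0)=0\) agrees.
\end{itemize}
Otherwise the Traitors murder one Faithful, yielding \((n-2,m)\) or \((n-2,m-1)\). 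Taking expectations via the law of total probability gives \eqref{eq:rvc}. Termination follows because \(n\) drops by two each round.

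I expect the main obstacle to be this boundary bookkeeping: verifying that a game ending mid-round (after the day phase, before the murder) is correctly captured by evaluating \(w_{\ddagger}\) at \(n-2\). I would handle it by the case check above, noting \(w_{\ddagger}(n-1,m)=w_{\ddagger}(n-2,m)\) whenever \(n-1\le 2m\). A secondary point is justifying the Markov property. Here I would argue from the exchangeability of the surviving Faithful under uniform random murders and uniform Faithful voting.
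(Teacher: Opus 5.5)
Your proposal is correct and takes essentially the same route as the paper. You condition on the day-phase outcome with probability \(p_F^{\ddagger}(n,m)\) from Lemma~\ref{lemma:collusion}, let the night murder remove one more Faithful, and apply the law of total probability exactly as in Proposition~\ref{prop:migdal}; your extra checks (the Markov property via the symmetry of uniform Faithful voting, and the mid-round boundary cases with \(w_{\ddagger}(n-2,m)=1\) when \(n-1=2m\) and \(w_{\ddagger}(n-2,0)=0\)) make explicit what the paper's brief proof leaves implicit when it asserts that the night phase always removes a further Faithful.
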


\begin{proof}
Each round consists of a day-phase banishment followed by a night-phase murder.
Under \(\sigma^\ddagger\) the Traitors collude every round, so the day-phase
banishment is a Faithful with probability \(p_F^{\ddagger}(n,m)\)
(Lemma~\ref{lemma:collusion}, equation~\eqref{eq:coll_banish}) and a Traitor
otherwise; the night phase always removes one further Faithful. The recurrence
then follows from the law of total probability, exactly as in
Proposition~\ref{prop:migdal}.
\end{proof}

\begin{proposition}[Traitor win probability under \(\sigma^\dagger\)]
\label{prop:vlopt}
Let \(w_{\dagger}(n, m)\) denote the Traitor win probability under
Vote-Left with the optimal Traitor strategy \(\sigma^\dagger\), starting
from state \((n, m)\). Then
\begin{equation}\label{eq:vlopt}
w_{\dagger}(n, m) \;=\;
\begin{cases}
0 & m = 0, \\[4pt]
1 & n \leq 2m + 2, \\[4pt]
\dfrac{n - m}{n}\, w_{\dagger}(n - 2,\, m)
 + \dfrac{m}{n}\, w_{\dagger}(n - 2,\, m - 1)
& n > 2m + 2.
\end{cases}
\end{equation}
\end{proposition}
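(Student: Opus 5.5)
The plan is to follow the structure of the proof of Proposition~\ref{prop:migdal}: a one-round conditioning argument, with the recurrence's three cases matched to three regions of the state space under \(\sigma^\dagger\).

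\textbf{Case \(m=0\).} All Traitors have been banished, so the Faithful have already won and \(w_\dagger=0\).

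\textbf{Interior case \(n>2m+2\).} Here \(\sigma^\dagger\) prescribes compliance with Vote-Left. Lemma~\ref{lemma:faithful} and Lemma~\ref{lemma:traitor} (equivalently Theorem~\ref{thm:pbe}) show that no player deviates, so the \(\mathtt{comply}\) state persists. By Observation~\ref{obs:main}(i), every surviving player receives exactly one vote. Tie-breaking is uniform, so the banished player is uniform over the \(n\) survivors. It is therefore a Traitor with probability \(m/n\) and a Faithful with probability \((n-m)/n\). The night murder then removes one further Faithful, taking the state to \((n-2,m-1)\) or \((n-2,m)\) respectively.

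One check is needed: the murder actually occurs. If a Faithful is banished, then \(n-1-m>m\), so parity has not yet been reached. If a Traitor is banished and \(m-1=0\), the game ends immediately, which agrees with \(w_\dagger(n-2,0)=0\).

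Since the state \((n,m)\) together with the public \(\mathtt{comply}\) flag is Markov under \(\sigma^\dagger\), the law of total probability gives the third line of~\eqref{eq:vlopt}. This mirrors the proof of Proposition~\ref{prop:migdal}. Termination holds because \(n\) falls by two each round.

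\textbf{Late-game case \(n\le 2m+2\).} This is the main obstacle: I must show the Traitors win with probability one.

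If \(n\le 2m\), parity already holds and \(w_\dagger=1\).

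Otherwise \(n\in\{2m+1,2m+2\}\), so there are \(n-m\in\{m+1,m+2\}\) Faithful. Under \(\sigma^\dagger\) all \(m\) Traitors collude on a single Faithful target. The Faithful still play Vote-Left, so each Faithful casts one vote and receives at most one vote, except that the target may receive one additional Faithful vote. The target therefore gets at least \(m\ge 2\) votes. Every other player gets at most \(1\) vote, because a Traitor who colludes no longer votes for its left neighbour.

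For \(m\ge 2\) the target strictly exceeds every other player and is banished with certainty. The state becomes \((n-1,m)\) with \(n-1-m\le m+1\) Faithful, and the night murder leaves at most \(m\) Faithful. Parity then holds, so \(w_\dagger=1\).

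The subtle points I would handle explicitly are these:
\begin{itemize}
\item The case \(m=1\) breaks the strict plurality: the target may be tied with other players. I would either treat \(m=1\) separately, using the punishment dynamics described in the late-game structure paragraph, or restrict the claim to \(m\ge2\) as in the televised range.
\item Any \(\mathtt{punish}\) state triggered by the deviation is irrelevant, because the game ends at parity before the punishment round occurs.
\end{itemize}
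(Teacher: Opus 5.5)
\textbf{Gap: the $m=1$ late game is left open.} Your route is the paper's: compliance in the interior gives uniform banishment and hence Migda\l{}'s step, and collusion in $2m<n\le 2m+2$ forces a Faithful banishment, after which banishment plus one murder reaches parity. Your interior case, your $m\ge 2$ vote count, and your remark that the $\mathtt{punish}$ state is never executed are correct and more explicit than the paper. (In the interior you do not need Lemmas~\ref{lemma:faithful}--\ref{lemma:traitor}; compliance is simply what $\sigma^\dagger$ prescribes there.)

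The case you flag but do not resolve is $m=1$ with $n\in\{3,4\}$. Neither remedy you suggest closes it:
\begin{itemize}
\item Punishment dynamics are irrelevant, for exactly the reason in your own second bullet: the game ends before any punishment round.
\item Restricting to $m\ge 2$ proves a different statement. It also fails to determine the $m\ge 2$ values. After Traitor banishments the recursion reaches states $(n',1)$, and the $m=1$ recursion bottoms out at $(3,1)$ or $(4,1)$. For instance, $w_\dagger(7,2)=\tfrac{5}{7}+\tfrac{8}{35}\,w_\dagger(3,1)$.
\end{itemize}

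\textbf{The missing idea is the choice of target.} A lone Traitor $i$ who votes for $L(i)$ is not deviating at all, and this is the only way a tie can arise. Suppose instead $i$ targets any Faithful $f\neq L(i)$; one exists, since there are $n-1\ge 2$ Faithful. Then:
\begin{itemize}
\item the cycle predecessor of $f$ is Faithful and still votes for $f$, so $f$ receives $2$ votes;
\item $L$ is a bijection and $i$'s only vote went to $f$, so every other player receives at most $1$ vote.
\end{itemize}
Hence $f$ is banished with certainty, and the rest of your late-game argument goes through. The same choice works uniformly in $m$: because $n-m>m$, some Faithful has a Faithful predecessor, and targeting it yields $m+1$ votes.

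This choice genuinely matters. If the lone Traitor picks a uniformly random Faithful target, then $w_\dagger(3,1)=5/6$ and $w_\dagger(4,1)=11/12$, not $1$. So the boundary value $1$ at $m=1$ holds only for the optimal target. In fact the $w_\dagger$ column of Table~\ref{tab:paramsweep} matches the random-target rule, giving $19/21\approx 0.905$ at $(7,2)$, whereas~\eqref{eq:vlopt} gives $33/35\approx 0.943$.
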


\begin{proof}
For \(n > 2m + 2\) the Traitors comply with Vote-Left, so the
day-phase banishment is uniform and the recurrence reduces to that of
Proposition~\ref{prop:migdal}. The recurrence differs from the Migda\l{}
recurrence only in the late-game boundary: states with \(2m < n \leq 2m + 2\),
which Proposition~\ref{prop:migdal} continues to break down by weighted
average, are absorbed by \(\sigma^\dagger\) into the Traitor-win region
\(w_\dagger = 1\) because optimal collusion forces the Faithful banishment.
\end{proof}

\section{Numerical evaluation}\label{sec:sim}

In this section we bring the exact analysis of
Sections~\ref{sec:equilibrium} and~\ref{sec:timing} together with Monte
Carlo simulation, in order to assess robustness across game sizes and to
quantify how Vote-Left changes the Faithful's winning probability. The
simulation loop is shown in Figure~\ref{fig:simulation}; compliance
outcomes match the exact \(w(n, m)\) values to within Monte Carlo error.

\begin{figure}[htbp!]
\centering
\includegraphics[width=0.6\textwidth]{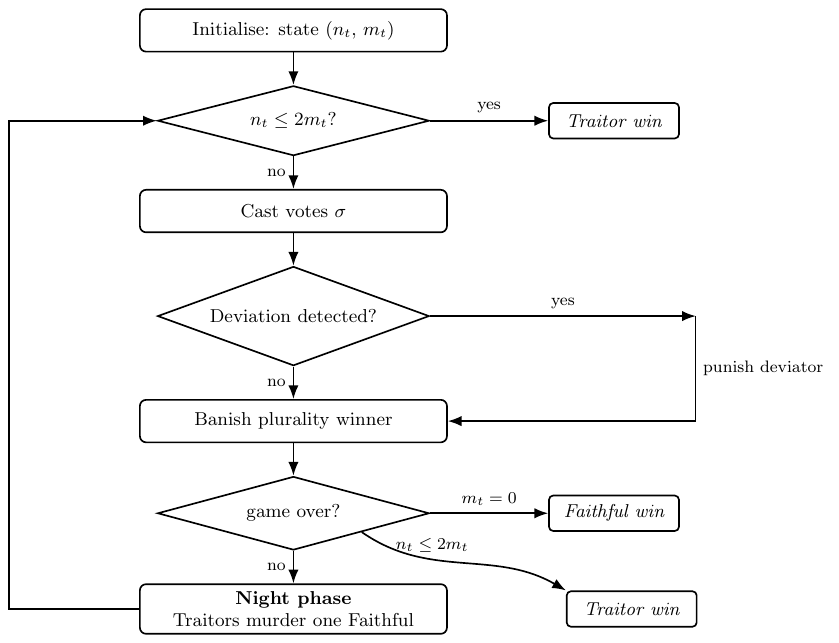}
\caption{\textbf{Monte Carlo simulation loop.}
One pass of the inner loop simulates a single game of
\(\mathrm{TG}(n, m)\). The parity win condition is checked before the
day phase; if Traitors already hold parity the game ends immediately.
During the day phase, votes are cast and any deviation from the
prescribed protocol is detected: if a deviation is found the deviator
is punished (all surviving players vote for them); otherwise the
plurality winner is banished. Win conditions are checked again after
banishment. The night phase follows whenever neither condition is met,
and the loop continues. The outer loop repeats for \(N_{\mathrm{sims}}\)
independent games and aggregates outcomes into win rates and confidence
intervals.}
\label{fig:simulation}
\end{figure}

Figure~\ref{fig:strategy_progression} illustrates the strategic
progression. Starting from mutual random voting (RV), the Traitors'
best response is to collude (\(\sigma^\ddagger\), RV+C), which drives
the Faithful win rate sharply downward; the Faithful best-respond by
adopting Vote-Left (VL), which restores the compliance baseline and
makes collusion detectable; and the Traitors then respond optimally
with \(\sigma^\dagger\) (VL+Opt), complying in the main game and
colluding only in the late-game phase.

\begin{figure}[htbp!]
\centering
\includegraphics[width=\textwidth]{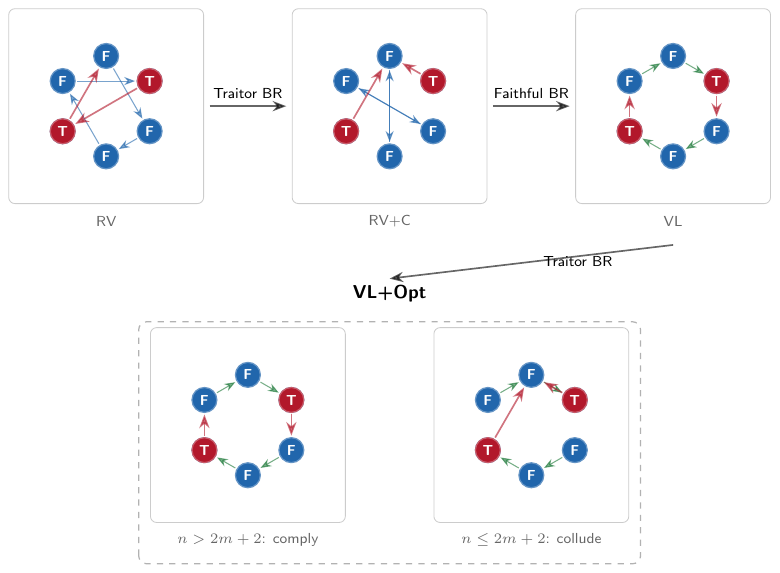}
\caption{\textbf{Strategic game progression.} \textbf{Top row:} voting
patterns under RV (random, uncoordinated), \(\sigma^\ddagger\) (RV+C,
random with Traitor collusion), and VL (cyclic Vote-Left). \textbf{Bottom row:} the Traitors'
optimal response \(\sigma^\dagger\) (VL+Opt) branches on the credibility
of Faithful punishment: when \(n > 2m + 2\) (left), Traitors comply with
the cyclic pattern; when \(n \leq 2m + 2\) (right), the Faithful lack the
numbers to punish, so Traitors collude. }
\label{fig:strategy_progression}
\end{figure}

Figure~\ref{fig:strategy} confirms these
progression steps 
quantitatively. Panel A shows the baseline of RV 
and with Panel D show that VL+Comp gives identical Faithful win rates, matching the exact Migda\l{} value at
every \(n\). Panel B shows that \(\sigma^\ddagger\) (RV+C) collapses
the Faithful win rate; in panel B, exact values (crosses, available for
\(n \leq 10\)) agree with the simulated lines, which validates the
simulation for the larger configurations (\(n \leq 25\)) where exact
computation of \(w_{\ddagger}\) is not available. Panel C confirms that
detection and punishment under VL+C substantially recovers the Faithful
win rate relative to \(\sigma^\ddagger\). Panels D--E show that
VL+Comp recovers the compliance baseline and that \(\sigma^\dagger\)
(VL+Opt) erodes it only modestly in the late-game phase.

Table~\ref{tab:paramsweep} reports Traitor win probabilities at seven
configurations spanning the televised range of \emph{The Traitors}.
Two further features stand out. First, the collusion penalty
(\textbf{RV} \(\to\) \textbf{RV+C}) is severe: the Traitor win rate
rises sharply across configurations and exceeds \(0.99\) in both
\(m = 4\) configurations, leaving the Faithful with virtually no
chance. Second, Vote-Left with optimal Traitor deviation
(\textbf{VL+Opt}) holds the Traitor win rate strictly below the
collusion baseline in every configuration; equivalently, it restores
the Faithful's winning probability by a factor of approximately three
at \((n, m) \in \{(22, 3), (20, 3)\}\) and by a factor of \(40\) or
more when \(m = 4\), where the collusion baseline is close to one.
This effect is further visible in Figure~\ref{fig:strategy_ratios} which quantifies the Traitor gain from
\(\sigma^\dagger\) relative to each alternative; at the televised
configurations with \(m = 3\) the ratio of Faithful win rates under
VL+Opt and \(\sigma^\ddagger\) is approximately three, and exceeds
forty when \(m = 4\).

\begin{figure}[hbtp!]
\centering
\includegraphics[width=\textwidth]{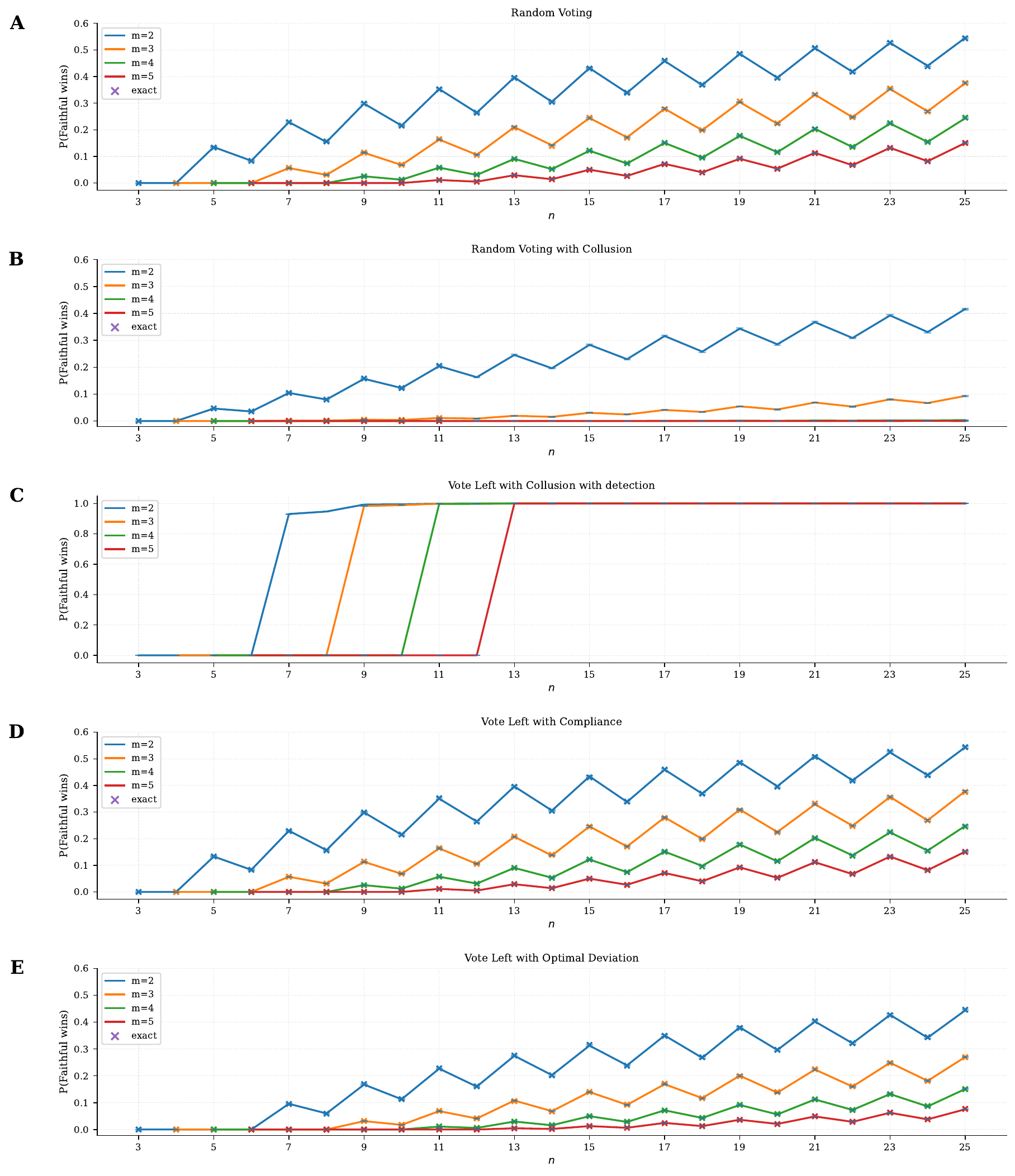}
\caption{\textbf{Strategy comparison: Faithful win rate by strategy.}
Each panel shows the Faithful win rate \(1 - w\) as a function of
\(n\) for traitor counts \(m \in \{2, 3, 4, 5\}\).
Where a closed-form recurrence exists, exact values are shown as
crosses; simulated values (lines, with 95\% confidence intervals) are
shown alongside for verification. \textbf{A,} Random Voting (RV):
Faithful win rate under mutual random play. \textbf{B,} Random Voting
with Collusion (\(\sigma^\ddagger\), RV+C): Traitors collude on a single
Faithful target each round while Faithful vote at random; the Faithful
win rate collapses sharply. \textbf{C,} Vote-Left with Collusion and detection
(VL+C): Faithful follow the cyclic Vote-Left protocol; Traitors still
attempt collusion but are detected and punished, raising the Faithful
win rate above the RV+C baseline. \textbf{D,} Vote-Left with
Compliance (VL+Comp): all players follow Vote-Left; Traitor win
probability equals the Migda\l{} value, as expected. \textbf{E,}
Vote-Left with Optimal Deviation (VL+Opt, \(\sigma^\dagger\)): Traitors
comply in the main game and collude in the late-game phase
\(n_t \leq 2m_t + 2\); Faithful win rate lies between the compliance
and collusion baselines.}
\label{fig:strategy}
\end{figure}

\begin{table}[htbp!]
\centering
\caption{Traitor win probability across game configurations.
\(w(n,m)\) is the exact Migda\l{} value
(Proposition~\ref{prop:migdal}); \(w_{\ddagger}(n,m)\) is the exact
value under RV+C from Lemma~\ref{lemma:collusion}, available for
\(n \leq 10\); \(w_{\dagger}(n,m)\) is the exact value under
VL+Opt (Proposition~\ref{prop:vlopt}). RV, RV+C, and VL+Opt are
simulated values. The upper block shows small configurations for which
\(w_{\ddagger}\) is available exactly; the lower block shows the
televised range. Lower values are better for the Faithful.}
\label{tab:paramsweep}
\smallskip
\begin{tabular}{@{}ccccccc@{}}
\toprule
\((n, m)\) & \(w\) & RV & \(w_{\ddagger}\) & RV+C & \(w_{\dagger}\) & VL+Opt \\
\midrule
(7,\,2) & 0.771 & 0.771 & 0.896 & 0.896 & 0.905 & 0.904 \\
(8,\,2) & 0.844 & 0.846 & 0.920 & 0.920 & 0.941 & 0.940 \\
(9,\,3) & 0.886 & 0.886 & 0.995 & 0.995 & 0.968 & 0.969 \\
(10,\,3) & 0.931 & 0.932 & 0.996 & 0.996 & 0.982 & 0.983 \\
(11,\,3) & 0.835 & 0.837 & 0.989 & 0.989 & 0.931 & 0.931 \\
(11,\,5) & 0.988 & 0.989 & 1.000 & 1.000 & 1.000 & 1.000 \\
\midrule
(15,\,2) & 0.570 & 0.568 & \text{--} & 0.717 & 0.685 & 0.687 \\
(20,\,3) & 0.776 & 0.777 & \text{--} & 0.957 & 0.860 & 0.863 \\
(22,\,3) & 0.752 & 0.753 & \text{--} & 0.946 & 0.839 & 0.840 \\
(24,\,3) & 0.731 & 0.731 & \text{--} & 0.933 & 0.819 & 0.819 \\
(25,\,3) & 0.625 & 0.625 & \text{--} & 0.907 & 0.731 & 0.730 \\
(22,\,4) & 0.864 & 0.865 & \text{--} & 0.999 & 0.927 & 0.927 \\
(25,\,4) & 0.754 & 0.757 & \text{--} & 0.997 & 0.850 & 0.849 \\
\bottomrule
\end{tabular}

\end{table}

\begin{figure}[htbp!]
\centering
\includegraphics[width=\textwidth]{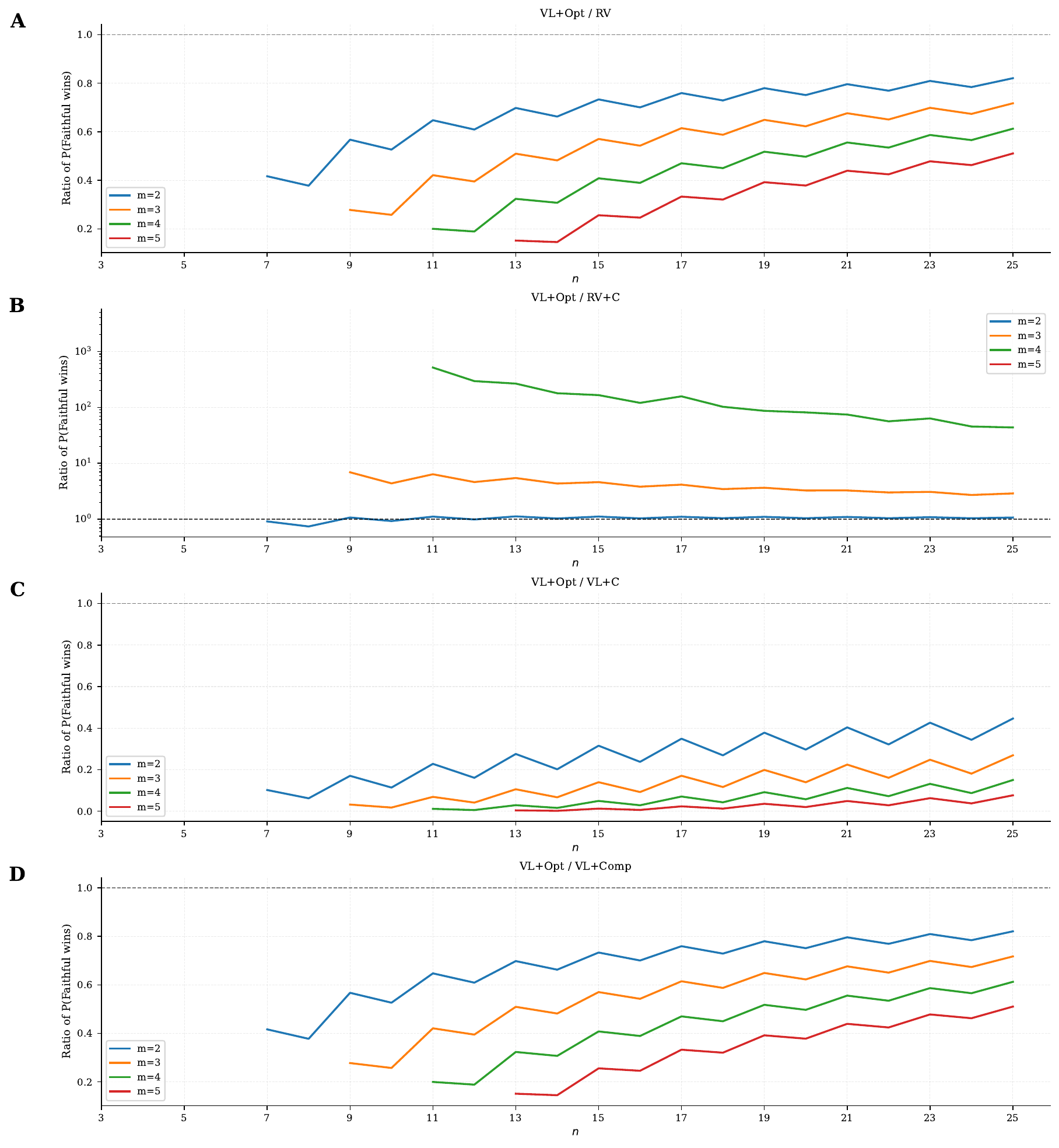}
\caption{\textbf{Traitor gain from \(\sigma^\dagger\) (VL+Opt).}
Each panel shows the ratio of the Faithful win rate under VL+Opt to
that under an alternative strategy. Exact closed-form values are used
where available; simulated values are used otherwise. Points are
omitted where the simulated denominator records no Faithful wins.
Values below one favour the Traitors; the dashed line marks equality.
\textbf{A,} VL+Opt / RV: near but below one, confirming a small
Traitor gain from optimal late-game deviation. \textbf{B,} VL+Opt /
RV+C (\(y\)-axis log scale): substantially above one, showing that
Vote-Left leaves the Faithful far better off than undetected collusion.
At the televised configurations (\(m = 3\), \(n \in \{20, 22, 24, 25\}\))
the ratio is approximately three, meaning the Faithful's win probability
is tripled relative to RV+C. At \(m = 4\) the ratio reaches
\(40\)--\(70\) in the televised range (\(n \in \{22, 25\}\)) and exceeds
\(10^{2}\) at smaller \(n\), where the Faithful are effectively doomed
under undetected collusion but retain a non-trivial chance under
VL+Opt. \textbf{C,} VL+Opt / VL+C: below one, confirming that Traitors prefer
\(\sigma^\dagger\) to a regime where collusion is always punished.
\textbf{D,} VL+Opt / VL+Comp: below one, showing a modest but
consistent Traitor gain over full compliance.}
\label{fig:strategy_ratios}
\end{figure}

\section{Discussion}\label{sec:discussion}

Our analysis covers the core Traitors game: a fixed roster of players, a
sequence of day votes and night murders, and a parity win condition for
the Traitors. Televised and parlour-game variants add further structure
that our model abstracts over. We discuss three natural extensions in
turn.

When the Traitors recruit a Faithful player mid-game, the recruited
player joins the Traitor coalition with full knowledge of Traitor
identities. The post-recruitment state is \((n_t, m_t + 1)\), and
compliance with Vote-Left remains rational for the recruited player
\emph{provided that the new state still satisfies the hypothesis of
Theorem~\ref{thm:pbe}}, that is, provided \(n_t > 2(m_t + 1) + 2\). Recruitment that fires early
(while \(n_t\) is large) lies comfortably within this bound; late
recruitment, near the endgame, can push the post-recruitment state into
the profitable-deviation region, in which case the recruit's incentives
align with late-game deviation (\(\sigma^\dagger\)) rather than with
compliance. Recruitment timing can also affect adjacency in the cyclic
ordering, and so it affects which Faithful the Traitors are best placed
to target once the late-game threshold is reached. A cleaner
characterisation of this recruitment-timing game is a natural extension
of our analysis.

If only a fraction \(q\) of the Faithful follow Vote-Left, the
non-compliant votes provide noise that can mask genuine Traitor
deviations, since a deviating Traitor now hides among the non-compliers
rather than standing out as the unique off-protocol vote. The critical
value of \(q\) below which the Faithful can no longer reliably attribute
deviations to Traitors will depend on the noise model and on the group
size. Characterising this threshold, and the resulting minimax
trade-off against random voting, is a natural open problem.

The televised format of \emph{The Traitors} includes a final end-game
round, typically with three to five players remaining, in which the
Faithful must vote unanimously to expel all remaining Traitors in order
to win; if any Traitor survives, the Traitors share the prize regardless
of relative numbers. This end-game mechanism replaces the parity win
condition and lies outside our model. A formal treatment, in which the
Faithful face a commitment and coordination problem under incomplete
information in a single climactic vote, is a natural extension of the
analysis.

The optimal late-game strategy \(\sigma^\dagger\) switches cleanly
between compliance and full collusion at the threshold \(n_t = 2m_t + 2\),
but it is not the only candidate worth studying. Mixed strategies that
deviate with some probability \(p < 1\) at every state, and gradual
strategies that increase the deviation probability as the game
progresses, both create detection exposure during the main game and can
in principle balance the punishment cost against the cumulative
concentration benefit. A systematic characterisation of the profitable
region in the joint (timing, intensity) space, and of how it shifts
with the cyclic ordering and recruitment, is a natural extension of our
analysis.

\section{Conclusion}

Vote-Left is a deterministic cyclic voting rule that reproduces the
uniform marginal of random voting while making any deviation observable.
It constitutes a BNE for every state with \(n_t > 2m_t + 2\)
(Theorem~\ref{thm:pbe}), and raises the Faithful's winning probability
by a factor of approximately three over random voting under collusion
across the configurations played on television.

The Traitors' best response is to comply throughout the main game and
deviate via collusion in the late-game phase (\(n_t \leq 2m_t + 2\)),
gaining approximately 6 to 11 percentage points over the compliance
baseline. The boundary is sharp: the punishment mechanism is credible
throughout the main game and ceases to be so only in the final rounds.

Operationally, Vote-Left is simple over the main game: each player
votes for the next surviving player in the agreed cyclic ordering. The
robustness of the protocol depends on the Faithful retaining a strict
majority through the main game, so that the punishment of a detected
deviator is credible; this is exactly the region of
Theorem~\ref{thm:pbe}. Whether human players would actually adopt a
deterministic protocol over the in-room deliberation that television
formats encourage is an empirical question, and one that we suggest requires the
author to be given a spot on the show.

\bibliographystyle{plain}
\bibliography{refs}

\end{document}